\newtheorem{teo}{Theorem}
\newtheorem{defin}[teo]{Definition}
\newtheorem{rem}[teo]{Remark}
\newtheorem{lem}[teo]{Lemma}
\newcommand{\beq}{\begin{equation}}
\newcommand{\eeq}{\end{equation}}
\begin{document}

\title{SOLUTION OF WALD'S GAME USING LOADINGS AND ALLOWED STRATEGIES}

\author{Valerio Capraro}
\address{University of Neuchatel, Switzerland}
\thanks{Supported by Swiss SNF Sinergia project CRSI22-130435}
\email{valerio.capraro@unine.ch}

\keywords{game theory, multiplicative game, loaded game, optimal
strategy, maxmin strategy, equilibrium strategy, isomorphic games,
approximation of games.}

\subjclass[2000]{Primary 52A01; Secondary 46L36}

\date{}

\maketitle

\begin{abstract}
We propose a new interpretation of the strange phenomena that some
authors have observed about the Wald game. This interpretation is
possible thanks to the new language of \emph{loadings} that Morrison
and the author have introduced in a previous work. Using the theory
of loadings and allowed strategies, we are also able to prove that
Wald's game admits a \emph{natural} solution and, as one can expect,
the game turns out to be fair for this solution. As a technical
tool, we introduce the notion of \emph{embedding a game into another
game} that could be of interest from a theoretical point of view.
\emph{En passant} we find a very easy example of a game which is
loadable in infinitely many different ways.
\end{abstract}

\section{Introduction}

Consider Wald's game \emph{pick the bigger integer}: the set of pure
strategies is the set of non-negative integers and the payoff
function of player 1 (which is the negative of the payoff function
of players 2) is

$$
f(s,t)=\left\{
         \begin{array}{ll}
           1, & \hbox{if $s>t$} \\
           0, & \hbox{if $s=t$} \\
           -1, & \hbox{if $s<t$}
         \end{array}
       \right.
$$

Wald\cite{Wa} proved that this game has no value if just countably
additive strategies are allowed. On the other hand, in \cite{He-Su}
it has been shown that this game has a value if one allows finitely
additive probability measures as strategies, but the value depends
on the order of integration in such a way that the \emph{internal}
player has an advantage\footnote{See \cite{Sc-Se} for more general
results and relations with other phenomena, as de Finetti's
non-conglomerability.}. So a game which is naturally symmetric turns
out to be asymmetric.

This note has two goals: first we want to show that this apparently
strange situation is perfectly explained and natural looking at the
game from the point of view of the theory of loadings and allowed
strategies; second we want to propose a natural solution of the game
using again these new notions. It is interesting the fact that
Wald's game turns out to be fair for such a \emph{natural} solution,
that is exactly what one would expect.

In the next section we recall very briefly some of the definitions
given in \cite{Ca-Mo} and we introduce the notion of embedding of
games that could be of interest from a theoretical point of view
(see Remark \ref{approximation}). The final section is devoted to
the results that motivate this note.

\section{Basic notions}

Let $\mathcal G_1$ and $\mathcal G_2$ be two $n$-player games and
let $P_j^i$ (resp. $S_j^i$), for $i\in\{1,...,n\}$ and $j=1,2$, be
the set of pure (resp. mixed) strategies of the $i$-th player of the
$j$-th game. Let us denote by $\pi_j$ the payoff function of the
game $\mathcal G_j$. Given a map $f:P_1^i\rightarrow P_2^i$ and
$\mu\in S_1^i$, we can consider the push-forward strategy $f_*\mu\in
S_2^i$.

\begin{defin}\label{embedding}
We say that
\begin{enumerate}
\item $\mathcal G_1$ embeds into $\mathcal G_2$ if there exist $n$ injective maps
$f_i:P_1^i\rightarrow P_2^i$ such that for all $(s_1^1,...s_1^n)\in
S_1^1\times...\times S_n^1$ one has
$$
\pi_1(s_1^1,...s_1^n)=\pi_2(f_{1^*}s_1^1,...f_{n^*}s_1^n))
$$
\item $G_1$ is isomorphic to $G_2$ if each $f_i$
is bijective.
\end{enumerate}
\end{defin}

\begin{rem}{\bf (Approximation of games)}\label{approximation}
The notion of embedding of games leads very naturally to some notion
of approximation of a game by taking an increasing family of
subgames that converges in some reasonable sense. This could be a
useful tool to study complicated games using easier ones. In order
to develop a good theory, one would need some theorem of convergence
and this is certainly the purpose of future research. Here we are
just interested in giving a first application of this notion.
\end{rem}

Recall that a \emph{semigroup} $S$ is a set equipped with an
associative binary operation $S\times S\rightarrow S$. Given $x,y\in
S$, the result of the operation is denoted by $xy$.
\begin{defin}\label{operation}
Let $S$ be a semigroup and let $W$ be a subset of $S$. Fix a
function $h:S\rightarrow[-1,1]$ The Operation Game associated to
$S$, $W$ and $h$, denoted by $\mathcal G(S,W,h)$, is the two-person
zero-sum game with $S$ the set of pure strategies for both players:
player 1 chooses $x\in S$ and player 2 chooses $y\in S$, then player
1 wins if $xy$ is in $W$. The payoff to player 1, which is the
negative of the payoff to player 2, is $h(xy)$.\footnote{The
definition in \cite{Ca-Mo} was actually little different, but the
reader can easily prove that they are basically equivalent. Indeed,
in \cite{Ca-Mo}, we use $h=\chi_W$ and we can pass to function
taking values in $[-1,1]$ without modifying the game with the
(usual) affine transformation
$[0,1]\ni\lambda\rightarrow2\lambda-1\in[-1,1]$. Finally each proof
in \cite{Ca-Mo} uses just the invariance of the measures and so it
could be applied to a a general function $h$, instead of the
particular one $\chi_W$.}
\end{defin}

\begin{rem}\label{wald}
We can also consider the more general case in which the set of pure
strategies of player $i$ is just a subset $S_i$ of $S$. In this case
the Operation Game will be denoted by $\mathcal G(S_1,S_2,S,W,h)$.
In particular we are interested in the operation game $\mathcal
G(\mathbb N,-\mathbb N,\mathbb Z,\mathbb N,\chi_{\mathbb
N}-\chi_{\mathbb Z\setminus\mathbb N})$, so that the payoff to
player 1 is
$$
f(x,y)=\left\{
         \begin{array}{ll}
           1, & \hbox{if $x+y\in\mathbb N$} \\
           0, & \hbox{if $x+y=0$} \\
           -1, & \hbox{if $x+y\in-\mathbb N$}
         \end{array}
       \right.
$$
\end{rem}
In \cite{Ca-Mo} it is shown that loadings play a fundamental rule to
play in a coherent way and to solve the Operation Game. Indeed in
general it is typical the situation that the game, which is
intrinsically symmetric, loses its symmetry. Here is a sketch of the
construction. First we recall the following

\begin{defin}
A \textbf{loading} on $S$ is any finitely additive probability
measure on $S$ which is left and right invariant with respect to the
operation in $S$. A loading is denoted by $\ell$.
\end{defin}

The loading is fixed a priori and it is \emph{part of the rules of
the game}, in the sense that it induces a set of allowed strategies
as follows

\begin{defin}\label{allowed}
The set of allowed strategies $\mathcal A_\ell$ is any maximal set
of finitely additive probability measures on $S$ containing the
loading $\ell$ and keeping the symmetry of the game, i.e.
$$
\int_S\int_S\chi_W(xy)dp(x)dq(y)=\int_S\int_S\chi_W(xy)dq(y)dp(x)
$$
for all $p,q\in\mathcal A_\ell$ (where $\chi$ stands for the
characteristic function).
\end{defin}

In \cite{Ca-Mo} the authors have shown that allowing to play just
the strategies in $\mathcal A_l$, then the Operation Game in the
sense of Definition \ref{operation} has the value $\ell(W)$ and
there are at least one optimal strategies for both the players,
which is indeed $\ell$. Moreover, it is shown that there are
operation games which are loadable in infinitely many different
ways. This basically means that such games are dramatically not well
defined and it is really necessary to fix the loading a priori in
order to play in a coherent way. We are going to show that this is
exactly what happens in the case of Wald's game. By the way we
recall that in many cases (finite games, compact games, other lucky
cases) the choice of a loading is not required, because there is
basically a unique way to load the game and so the game is
intrinsically well defined. Unfortunately, this is not the case also
for very natural groups and semigroups, such as $(\mathbb Z,+)$ and
$(\mathbb N,\cdot)$, as is also shown in Lemma \ref{lemma}.

\section{Main results}

The following lemma is quite unexpected but probably well-known to
the experts. In our case it is necessary for the proof of the main
result and also it provides a very easy example of a game which is
loadable in infinitely many different ways. We recall that at the
end of \cite{Ca-Mo}, it is proposed an example of a game which is
loadable in infinitely many different ways, but this example is
quite complicated and unnatural from the point of view of a
\emph{real} game, namely a game that can be proposed to human
beings.

\begin{lem}\label{lemma}
For any real number $s\in[0,1]$, there is a finitely additive
translation invariant probability measures $m$ on $\mathbb Z$ such
that $m(\mathbb N)=s$. In particular there are uncountably many
different ways to load the Operation Game $\mathcal G(\mathbb
Z,\mathbb N)$.
\end{lem}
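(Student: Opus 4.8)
The plan is to build, for each $s\in[0,1]$, a finitely additive translation-invariant probability measure $m$ on $\mathbb Z$ with $m(\mathbb N)=s$ by averaging two ``one-sided'' invariant means and then transporting the construction back to $\mathbb Z$. First I would recall that $(\mathbb Z,+)$ is amenable, so the set of finitely additive translation-invariant probability measures on $\mathbb Z$ is nonempty; fix one such measure $m_0$. The difficulty is that a generic invariant mean need not assign any prescribed value to $\mathbb N$ — indeed symmetric constructions tend to give $m_0(\mathbb N)=1/2$ — so the real content is to produce measures concentrated, in the finitely additive sense, on $\mathbb N$ and on $-\mathbb N$ respectively.

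The key step is therefore to exhibit a finitely additive translation-invariant probability measure $m_+$ on $\mathbb Z$ with $m_+(\mathbb N)=1$ (and symmetrically $m_-$ with $m_-(-\mathbb N)=1$, obtained from $m_+$ by the reflection $x\mapsto -x$, which is an automorphism of $(\mathbb Z,+)$). To get $m_+$, I would work on the sub-semigroup $\mathbb N=\{0,1,2,\dots\}$: it is a cancellative abelian semigroup, hence amenable, so it carries a finitely additive probability measure $\nu$ invariant under the translations $x\mapsto x+k$ for every $k\in\mathbb N$ (equivalently, a Banach--Mazur / Banach limit type functional on $\ell^\infty(\mathbb N)$ that is shift-invariant). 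Push $\nu$ forward to $\mathbb Z$ via the inclusion $\mathbb N\hookrightarrow\mathbb Z$; the resulting measure is still invariant under translation by every $k\ge 0$, and invariance under translation by negative integers follows because for $A\subseteq\mathbb Z$ the sets $A$ and $A-1$ differ, inside $\mathbb N$, only by the single point issue at $0$, which has $\nu$-measure $0$ (any shift-invariant finitely additive measure on $\mathbb N$ kills finite sets). Hence the push-forward $m_+$ is fully $\mathbb Z$-invariant and $m_+(\mathbb N)=\nu(\mathbb N)=1$. I expect the bookkeeping around translating by negative numbers — checking that no mass is created or lost at the ``boundary'' $\{0\}$ — to be the main technical obstacle, though it is ultimately a one-line observation once phrased correctly.

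Finally I would set $m=s\,m_+ + (1-s)\,m_-$. A convex combination of translation-invariant finitely additive probability measures is again one, and $m(\mathbb N)=s\cdot 1+(1-s)\cdot m_-(\mathbb N)=s$, using $m_-(\mathbb N)\le m_-(\mathbb Z\setminus(-\mathbb N))=0$ together with $m_-(\{0\})=0$. For the ``in particular'' clause, distinct values of $s$ give measures $m$ with distinct values on the event $\mathbb N$, hence distinct measures; since by Definition~\ref{allowed} and the discussion preceding it each such $m$ can serve as a loading for $\mathcal G(\mathbb Z,\mathbb N)$ (translation invariance is exactly left and right invariance for the abelian operation $+$), the Operation Game $\mathcal G(\mathbb Z,\mathbb N)$ is loadable in uncountably many ways.
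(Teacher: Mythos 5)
Your proof is correct, but it takes a genuinely different route from the paper's. You construct the two extreme invariant means directly: a shift-invariant mean (Banach limit) on $\mathbb N$ pushed forward along the inclusion $\mathbb N\hookrightarrow\mathbb Z$ gives a translation-invariant $m_+$ with $m_+(\mathbb N)=1$ --- the only delicate point, which you correctly isolate, being that translating across the boundary at $0$ only moves a finite, hence null, set --- then reflection gives $m_-$ with $m_-(\mathbb N)=0$, and the convex combination $s\,m_+ +(1-s)\,m_-$ realizes every $s\in[0,1]$ exactly. The paper instead defines, for each integer $k$, a density along the asymmetric windows $[-kn,n]$, extends it by Hahn--Banach, forces translation invariance via the Markov--Kakutani fixed point theorem, and thereby obtains invariant means taking the values $\frac{1}{k+1}$ and $\frac{k}{k+1}$ on $\mathbb N$; since these values only accumulate at $0$ and $1$, the paper must then invoke a theorem of Chou, that the set of values taken on a fixed set by invariant means is convex and closed, to fill in all of $[0,1]$. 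Your argument buys self-containedness and exactness: it reaches the endpoints directly and replaces the appeal to Chou by the trivial observation that a convex combination of invariant means is an invariant mean. What the paper's approach buys is a more concrete picture of where the asymmetry comes from (skewed averaging windows) and a template that in principle computes the achievable measures of sets other than $\mathbb N$. Both arguments ultimately rest on the same soft existence principle (Hahn--Banach, equivalently the existence of Banach limits), so neither is more constructive than the other in any essential sense.
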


\begin{proof}
Let $k$ be a positive integer. If $A$ is a subset of $\mathbb Z$, we
define its $k$-density as follows
$$
d_k(A)=\lim_{n\rightarrow\infty}\frac{|A\cap[-kn,n]|}{(k+1)n+1}
$$
where, as usual, the notation $[n,m]$ stands for the set of integers
$x$ such that $n\leq x\leq m$. Clearly there are many subsets $A$
for which $d_k$ does not exist, so let $D_k=\{\chi_A, A \text{ s.t.
} d_k \text{ exists}\}$ and let $X_k\subseteq L^\infty(\mathbb Z)$
be the linear span of $D_k$ inside $L^\infty(\mathbb Z)$. Clearly
$d_k$ extends to a linear and bounded functional on $X_k$ which is
dominated by the $L^\infty$-norm. Hence we can apply the Hahn-Banach
extension theorem to get a positive linear functional
$\phi_k:L^\infty(\mathbb Z)\rightarrow\mathbb R$. Now let $\mathcal
X_k\subseteq (L^\infty(\mathbb Z))^*$ be the set of all such
extensions $\phi_k$. It is a convex and compact space with respect
to the weak*-topology. Convexity is indeed trivial and compactness
follows from the fact that $\mathcal X_k$ is weak* closed in the
unit ball of $(L^\infty(\mathbb Z))^*$, which is weak* compact by
the Banach-Alouglu theorem. Now $\mathbb Z$ acts on $\mathcal X_k$
via translations as a commutative family of operators and then
Markov-Kakutani fixed point theorem applies. Such a fixed point, say
$f_k$, is an element in $L^\infty(\mathbb Z)^*$ which is fixed by
every translation and such that $f_k(1)=1$. It follows that setting
$m_k(A)=f(\chi_A)$ we get a translation invariant finitely additive
probability measure on $\mathbb Z$. Now since $f_k$ extends $d_k$,
we have $m_k(\mathbb N)=d_k(\mathbb N)=\frac{1}{(k+1)}$. When $k$
goes to infinity, this proves that there exist finitely additive
translation invariant probability measures taking over $\mathbb N$
values arbitrarily close to $0$. Now repeating the argument with
$$
d_k(A)=\lim_{n\rightarrow\infty}\frac{|A\cap[-n,kn]|}{(k+1)n+1}
$$
we also prove that there exist finitely additive translation
invariant probability measures which take over $\mathbb N$ values
arbitrarily close to $1$. Now we know - and this is basically due to
Chou \cite{Ch} - that the set of values which are taken by some
finitely additive translation invariant probability measure on a
fixed set $W$ is convex and closed, so, in our case, it has to be
the whole interval $[0,1]$.
\end{proof}

\begin{teo}\label{uno}
Wald's game is equivalent to an Operation Game which is loadable in
infinitely many different ways.
\end{teo}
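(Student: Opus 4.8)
The plan is to exhibit an explicit isomorphism (or at least an embedding that is actually a bijection on pure strategies) between Wald's game \emph{pick the bigger integer} and the Operation Game $\mathcal G(\mathbb Z,\mathbb N)$ described in Remark \ref{wald}, and then invoke Lemma \ref{lemma} to conclude the latter is loadable in infinitely many ways. The only work is bookkeeping: Wald's game has pure strategy set $\mathbb N$ (the non-negative integers) for both players, and the Operation Game $\mathcal G(\mathbb N, -\mathbb N, \mathbb Z, \mathbb N, \chi_{\mathbb N}-\chi_{\mathbb Z\setminus \mathbb N})$ has player 1 choosing from $\mathbb N$ and player 2 choosing from $-\mathbb N$, with payoff $f(x,y)=1,0,-1$ according as $x+y>0,\ =0,\ <0$.

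First I would set up the two maps. Take $f_1:\mathbb N\to\mathbb N$ to be the identity and $f_2:\mathbb N\to -\mathbb N$ to be $t\mapsto -t$; both are bijections onto the respective pure-strategy sets of the Operation Game. Then I would check the payoff identity: for pure strategies $s$ (player 1) and $t$ (player 2) of Wald's game, $s+(-t)>0 \iff s>t$, $s-t=0\iff s=t$, and $s-t<0\iff s<t$, so the Operation Game payoff $f(f_1(s),f_2(t))=f(s,-t)$ equals Wald's $f(s,t)$ on the nose. Since the maps are bijections between the full pure-strategy sets, this is an isomorphism of games in the sense of Definition \ref{embedding}(2), hence in particular the games have the same mixed-strategy theory via push-forward. (If one prefers to be scrupulous about Definition \ref{embedding}'s formula referring to mixed strategies $s_1^i\in S_1^i$, the identity for pure strategies extends by bilinearity of the expected payoff under push-forward measures, so the displayed equation holds for all mixed $p,q$ as well.)

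Second, I would identify $\mathcal G(\mathbb Z,\mathbb N)$ in the statement of Lemma \ref{lemma} with the Operation Game relevant here. A small subtlety to address: Lemma \ref{lemma} literally concerns $\mathcal G(\mathbb Z,\mathbb N)$ — the game where \emph{both} players choose from all of $\mathbb Z$ — whereas the isomorphism above lands in $\mathcal G(\mathbb N,-\mathbb N,\mathbb Z,\mathbb N,\dots)$, where the two players' strategy sets are restricted. I would note that the loadings constructed in Lemma \ref{lemma} are translation-invariant finitely additive probability measures on $\mathbb Z$, and such a measure restricts/pushes forward to give the relevant allowed strategies on the subsemigroups; more simply, a loading on $S=\mathbb Z$ with $m(\mathbb N)=s$ directly witnesses, via the invariance identity, that the restricted game is loadable with value $s$. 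So the uncountably many distinct translation-invariant measures $m$ with $m(\mathbb N)$ ranging over $[0,1]$ yield uncountably many distinct loadings of the Operation Game isomorphic to Wald's game, and the conclusion follows.

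**The main obstacle** — really the only point requiring care rather than routine verification — is precisely this matching of the restricted Operation Game $\mathcal G(\mathbb N,-\mathbb N,\mathbb Z,\mathbb N,h)$ against the unrestricted $\mathcal G(\mathbb Z,\mathbb N)$ of Lemma \ref{lemma}, i.e. confirming that restricting the players' pure-strategy sets does not collapse the family of admissible loadings. I expect this to be handled by observing that the loading is a measure on the ambient semigroup $S$ (here $\mathbb Z$) regardless of which subsets $S_1,S_2$ the players draw from, so the content of Lemma \ref{lemma} transfers verbatim; the value of the loaded game is still $\ell(W)=m(\mathbb N)$, which we have arranged to be an arbitrary element of $[0,1]$. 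Everything else — the payoff computation and the bijectivity of $f_1,f_2$ — is immediate.
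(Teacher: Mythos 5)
Your proposal is correct and follows essentially the same route as the paper: the same maps $f_1(x)=x$, $f_2(y)=-y$ giving an isomorphism of Wald's game with $\mathcal G(\mathbb N,-\mathbb N,\mathbb Z,\mathbb N,\chi_{\mathbb N}-\chi_{\mathbb Z\setminus\mathbb N})$, followed by Lemma \ref{lemma}. You even flag and resolve the same subtlety the paper emphasizes --- that loadability in infinitely many ways transfers to the restricted game because the loading lives on the ambient semigroup $\mathbb Z$, which is unchanged.
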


\begin{proof}
The idea is easy: consider the Operation Game $\mathcal G(\mathbb
Z,\mathbb N,h)$, with $h(x)=\chi_{\mathbb N}(x)-\chi_{\mathbb
Z-\mathbb N}(x)$. We want to embed Wald's game into this game. More
precisely we are going to show that Wald's game is equivalent to the
game $\mathcal G(\mathbb N,-\mathbb N,\mathbb Z, \mathbb
N,\chi_{\mathbb N}-\chi_{\mathbb Z\setminus\mathbb N})$ of Remark
\ref{wald} and so in particular, the set of outcomes\footnote{The
set of outcomes in this context is the set of $xy$, when $x\in S_1$
and $y\in S_2$.} is still $\mathbb Z$. This is important in order to
apply Lemma \ref{lemma}, since it is certainly false in general that
a subgame of a game which is loadable in infinitely many different
ways is still loadable in infinitely many different ways, but a
sufficient condition to pass this property to subgames is clearly
that the set of outcomes $S$ remains the same. Now that we have the
idea, the proof is straightforward: with the notation of Definition
\ref{embedding}, define $f_1(x)=x$ and $f_2(y)=-y$. So Player $1$,
after transforming the game, wins if and only if
$f_1(x)+f_2(y)\in\mathbb N$ that happens if and only if
$\max\{x,y\}=x$. Hence Player 1 wins in the Operation Game $\mathcal
G(\mathbb N,-\mathbb N, \mathbb Z, \mathbb N,\chi_{\mathbb
N}-\chi_{\mathbb Z\setminus\mathbb N})$ if and only if it wins in
the Wald game and the payoff function is clearly preserved. This
proves that these two games are isomorphic and in particular Wald's
game is loadable in infinitely many different ways.
\end{proof}

Now we are ready to propose a solution for the Wald game. There is
indeed a very natural approach to solve the game $\mathcal G(\mathbb
N,-\mathbb N,\mathbb Z,\mathbb N,\chi_{\mathbb N}-\chi_{\mathbb
Z\setminus\mathbb N})$, given by the fact that the set of pure
strategies of the first player is the opposite of the set of pure
strategies of the second player. This observation leads to the fact
that a loading that can be \emph{accepted for playing by both the
players} is any loading verifying $\ell(A)=\ell(-A)$, for all
$A\subseteq\mathbb N$. It indeed reflects the point of view of both
the players. It is clear that any loading verifying such equality is
such that $\ell(\mathbb N)=\frac{1}{2}$. Hence

\begin{teo}
Wald's game can be solved in a \emph{natural} way and the value of
the game for this solution is $0$, i.e. the game is fair.
\end{teo}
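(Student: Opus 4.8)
The plan is to leverage all the machinery already assembled in the excerpt, so that the final theorem reduces to a short assembly of earlier facts together with one computation about symmetric loadings. First I would recall, from Theorem~\ref{uno}, that Wald's game is isomorphic (hence equivalent for all game-theoretic purposes) to the Operation Game $\mathcal G(\mathbb N,-\mathbb N,\mathbb Z,\mathbb N,\chi_{\mathbb N}-\chi_{\mathbb Z\setminus\mathbb N})$, with set of outcomes $\mathbb Z$ and payoff $f(x,y)=\chi_{\mathbb N}(x+y)-\chi_{\mathbb Z\setminus\mathbb N}(x+y)$ under the identifications $f_1(x)=x$, $f_2(y)=-y$. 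So it suffices to exhibit a natural loading for this Operation Game and to show the value is $0$.

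Next I would justify the claim that the \emph{natural} choice of loading is one satisfying $\ell(A)=\ell(-A)$ for every $A\subseteq\mathbb Z$ (equivalently $A\subseteq\mathbb N$, after transporting through $f_1,f_2$). The point is that the two players' pure-strategy sets are reflections of one another, so a loading $\ell$ should be acceptable to player~1 exactly when its reflection $\ell\circ(-\,\cdot\,)$ is acceptable to player~2; insisting that a single loading serve both players forces reflection-invariance. I would observe that such a loading exists: Lemma~\ref{lemma} produces, for each $s\in[0,1]$, a translation-invariant finitely additive probability measure $m$ on $\mathbb Z$ with $m(\mathbb N)=s$; taking $s=\tfrac12$ and then symmetrizing, $\ell(A):=\tfrac12\big(m(A)+m(-A)\big)$, yields a translation-invariant (hence two-sided invariant, since $(\mathbb Z,+)$ is abelian) finitely additive probability measure that is reflection-invariant, i.e.\ a loading in the required sense. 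Any reflection-invariant loading automatically satisfies $\ell(\mathbb N)=\ell(-\mathbb N)$; combined with $\ell(\mathbb N)+\ell(-\mathbb N)+\ell(\{0\})=1$ and $\ell(\{0\})=0$ (a single point has density zero under any translation-invariant mean on $\mathbb Z$), this gives $\ell(\mathbb N)=\tfrac12$.

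Then I would invoke the main result quoted from \cite{Ca-Mo}: once the loading $\ell$ is fixed and one restricts to the allowed strategies $\mathcal A_\ell$ of Definition~\ref{allowed}, the Operation Game has value equal to $\ell(W)$, with $\ell$ itself an optimal strategy for both players. Here $W=\mathbb N$ for the transported game, but the payoff function is $h=\chi_{\mathbb N}-\chi_{\mathbb Z\setminus\mathbb N}$ rather than $\chi_W$; as remarked in the footnote to Definition~\ref{operation}, passing from $\chi_W$ to a general $h$ valued in $[-1,1]$ costs only the affine change $\lambda\mapsto 2\lambda-1$, so the value of the game for $\ell$ is $2\ell(\mathbb N)-1 = 2\cdot\tfrac12 - 1 = 0$. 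Transporting back along the isomorphism of Theorem~\ref{uno}, the value of Wald's game under this natural solution is $0$, i.e.\ the game is fair, which is the assertion.

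The main obstacle, such as it is, is not a hard estimate but a matter of bookkeeping and of making the word ``natural'' precise. Concretely I would need to (i) check that the symmetrization $\ell$ is genuinely two-sided invariant and is a legitimate loading, not merely a mean; (ii) confirm that a reflection-symmetric loading does sit inside \emph{some} allowed set $\mathcal A_\ell$ so that the \cite{Ca-Mo} value theorem applies verbatim — this uses that $\ell$ itself always satisfies the symmetry condition of Definition~\ref{allowed}, so $\{\ell\}$ extends to a maximal such family by a Zorn's-lemma argument; and (iii) track the affine rescaling $2\lambda-1$ carefully through the payoff so the final number comes out exactly $0$ rather than $\tfrac12$. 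None of these is deep, but they are the places where the argument could be sloppy, so I would state them explicitly.
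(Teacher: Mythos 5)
Your setup (reducing to the Operation Game via Theorem~\ref{uno}, choosing a reflection-invariant loading, and computing $\ell(\mathbb N)=\tfrac12$ from $\ell(\{0\})=0$) matches the paper and is fine --- indeed your existence argument for such a loading, symmetrizing a measure from Lemma~\ref{lemma}, is spelled out more explicitly than in the paper. The gap is in your step where you invoke the value theorem of \cite{Ca-Mo} ``verbatim'' for the allowed strategies of Definition~\ref{allowed}. That theorem, and that definition, concern the game $\mathcal G(S,W,h)$ in which \emph{both} players have the whole semigroup $S$ as pure-strategy set, so that the loading $\ell$ (a finitely additive measure on $S$) is itself a playable strategy and $\mathcal A_\ell$ can be required to contain it. In the game $\mathcal G(\mathbb N,-\mathbb N,\mathbb Z,\mathbb N,\chi_{\mathbb N}-\chi_{\mathbb Z\setminus\mathbb N})$ the players are confined to the proper subsets $\mathbb N$ and $-\mathbb N$ of $\mathbb Z$; a loading on $\mathbb Z$ is not a strategy of either player, so ``$\ell\in\mathcal A_\ell$'' does not parse and your Zorn's-lemma patch in point (ii) does not address the real problem. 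The paper flags exactly this (``we cannot use directly Definition~\ref{allowed}\dots'') and the substantive content of its proof is the repair: it defines a reflection $\mu\mapsto\hat\mu$ sending measures on $\mathbb N$ (or $-\mathbb N$) to measures on $\mathbb Z$, redefines the allowed strategies as a maximal set of \emph{pairs} $(p,q)\in\mathcal P(\mathbb N)\times\mathcal P(-\mathbb N)$ satisfying the Fubini symmetry and linked to $\ell$ through the reflection, and exhibits the explicit candidate optimal profile $(\check\ell_1,\check\ell_2)$ with $\check\ell_1(A)=\ell(A)+\ell(-A)$ for $A\subseteq\mathbb N$, before rerunning the \cite{Ca-Mo} argument to get the value $2\ell(\mathbb N)-1=0$.

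So while your final number is right and the overall strategy is the intended one, the proposal is missing the one genuinely new construction the theorem requires: a definition of ``allowed strategies induced by $\ell$'' that makes sense when the two players' strategy spaces are different subsets of the ambient semigroup, together with an explicit optimal pair living in those spaces. Without that, the appeal to the \cite{Ca-Mo} value theorem is an appeal to a statement whose hypotheses are not met.
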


\begin{proof}
We have already observed that a natural way to solve the game
consists in fixing a loading $\ell$ verifying $\ell(A)=\ell(-A)$,
for $A\subseteq\mathbb N$, and then playing the allowed strategies
induced by $\ell$. The point is that we cannot use directly
Definition \ref{allowed} to construct the set of allowed strategies,
since the mixed strategies of the players are not finitely additive
probability measures on $\mathbb Z$. Indeed the strategies of Player
1 (resp. Player 2) are measure on $\mathbb N$ (resp. $-\mathbb N$).
But we can take inspiration from Definition \ref{allowed} and make
the following construction. First of all, observe that given a
measure $\mu$ on $\mathbb N$, we can construct a probability measure
$\hat\mu$ on $\mathbb Z$ \emph{by reflecting} $\mu$ in the following
way: given $A\subseteq\mathbb Z$, we define
$$
\hat\mu(A)=\mu(A\cap\mathbb N)+\mu(-((A-1)\cap(-\mathbb N)))
$$
where $A-1=\{a-1,a\in A\}$. Analogously we can construct a
probability measure $\hat\mu$ on $\mathbb Z$ by reflecting a
probability measure $\mu$ on $-\mathbb N$. Let $\mathcal P(X)$ be
the set of probability measures on a set $X$, we define the set of
allowed strategies for $\mathcal G(\mathbb N,-\mathbb N,\mathbb
Z,\mathbb N,\chi_{\mathbb N}-\chi_{\mathbb Z\setminus\mathbb N})$
induced by $\ell$ to be any subset $\mathcal A_\ell\subseteq
\mathcal P(\mathbb N)\times \mathcal P(-\mathbb N)$ which is maximal
with respect to the following two properties:
\begin{enumerate}
\item
$$
\int_{\mathbb N}\int_{-\mathbb N}(\chi_{\mathbb
N}(x+y)-\chi_{\mathbb Z\setminus\mathbb
N}(x+y))dp(x)dq(y)=\int_{-\mathbb N}\int_{\mathbb N}(\chi_{\mathbb
N}(x+y)-\chi_{\mathbb Z\setminus\mathbb N}(x+y))dq(y)dp(x)
$$
for all $(p,q)\in\mathcal A_\ell$.
\item $\ell\in\{\hat\mu,\mu \text{ allowed strategy}\}$
\end{enumerate}
Observe that this maximal set exists and contains the pair $(\check
\ell_1,\check\ell_2)$, where $\check\ell_1\in\mathcal P(\mathbb N)$
and $\check\ell_2\in\mathcal P(-\mathbb N)$ are defined by setting
$$
\check\ell_1(A)=\ell(A)+\ell(-A)
$$
for all $A\subseteq\mathbb N$ ($\check\ell_2\in\mathcal P(-\mathbb
N)$ is defined in an analogue way). It is now easy to check, being
the details basically the same as the main theorem in \cite{Ca-Mo},
that $(\check\ell_1,\check\ell_2)$ is a profile of optimal
strategies and that the value of the game is indeed $2\ell(\mathbb
N)-1=2\cdot\frac{1}{2}-1=0$.
\end{proof}


\begin{thebibliography}{9999}

\bibitem[Ca-Mo]{Ca-Mo} V. Capraro and K.Morrison, \emph{Existence of optimal strategies for the operation game on amenable semigroup},
Preprint (2011)

\bibitem[Ch]{Ch} C. Chou, \emph{On topologically invariant means on a
locally compact group}, Trans. Amer. Math. Soc. 151 (1970) 443--456.

\bibitem[He-Su]{He-Su} D. Heath and W. Sudderth, \emph{On a theorem of de Finetti, oddsmaking, and game theory},
Ann. of Math. Stat. 43 (1972) 2072--2077.

\bibitem[Sc-Se]{Sc-Se} M. J. Schervish and T. Seidenfeld, \emph{A fair minimax theorem for two-person (zero-sum) games involving finitely additive strategies},
in \emph{Rethinking the Foundations of Statistics}, ed: J. B.
Kadane, M. J. Schervish, and T. Seidenfeld, Cambridge: Cambridge
Univ. Press (1999) 267--291.

\bibitem[Mo]{Mo} K. E. Morrison, \emph{The multiplication game},
Math. Mag. 83 (2010) 100--110.

\bibitem[Wa]{Wa} A. Wald, \emph{Statistical decision functions}, New
York: Wiley (1950).

\end{thebibliography}
\end{document}